\DeclareMathOperator*{\argmin}{arg\,min}
\newtheorem{theorem}{Theorem}
\newtheorem{lemma}{Lemma}
\theoremstyle{definition}
\newtheorem{example}{Example}
\newcommand{\primal}{\textsc{P}}
\newcommand{\dual}{\textsc{D}}
\newcommand{\alg}{\textsc{A}}
\newcommand{\type}{\tau}
\newcommand{\chosen}{\textrm{\em matched}}
\newcommand{\notchosen}{\textrm{\em unmatched}}
\newcommand{\nonadapt}{\textrm{\em unknown}}
\newcommand{\kmin}{k_{\min}}
\newcommand{\kmax}{k_{\max}}
\title{Understanding Zadimoghaddam's Edge-weighted Online Matching Algorithm: Unweighted Case}
\author{
Zhiyi Huang%
\thanks{The University of Hong Kong. Email: zhiyi@cs.hku.hk}
\and
Runzhou Tao%
\thanks{Columbia University. This work was done while the author was at IIIS, Tsinghua University. Email: runzhou.tao@columbia.edu}
}
\date{Last Update: October 2019}
\begin{document}

\maketitle

\begin{abstract}
    This article identifies a key algorithmic ingredient in the edge-weighted online matching algorithm by \cite{Zadimoghaddam/arXiv/2017} and presents a simplified algorithm and its analysis to demonstrate how it works in the unweighted case.
\end{abstract}

\section{Introduction}
\label{sec:intro}

Online edge-weighted bipartite matching is a major open problem in the area of online algorithms. 
Consider a bipartite graph $G = (L, R, E)$ where $L$ and $R$ denote the sets of left-hand-side (LHS) and right-hand-side (RHS) vertices and $E \subseteq L \times R$ denotes the set of edges.
Further, the graph is edge-weighted;
every edge $e \in E$ is associated with a nonnegative weight $w(e) \ge 0$.
The LHS is given upfront, while vertices on on the RHS arrive online one at a time. 
On the arrival of an online vertex $j \in L$, the algorithm observes the set of edges adjacent to $j$ and, thus, the set of offline neighbors of $j$, denoted as $N(j)$.
The algorithm needs to immediately decide whether to match $j$ to an offline neighbor, and if so which one.
The algorithm is allow to match multiple online vertices and, thus, the corresponding edges to the same offline vertex, but only the one with the largest edge weight counts.
This is equivalent to assuming that the algorithm can rematch an offline vertex to a new online vertex, disposing its previously matched online neighbor. 
Hence, it is also referred to as the \emph{free disposal} model.
The goal is to maximize the total weight of the matching.

This problem generalizes the unweighted online bipartite matching problem by \citet{KarpVV/STOC/1990} and the vertex-weighted problem by \citet{AggarwalGKM/SODA/2011}, both of which admit online algorithms with the optimal $1 - \frac{1}{e}$ competitive ratio.
The $1 - \frac{1}{e}$ ratio also serves as an upper bound for the competitive ratio of any online algorithm for the edge-weighted problem considered in this article.

On the other hand, it is folklore that a simple greedy algorithm, which matches each online vertex to the offline neighbor that provides the maximum marginal gain (e.g., the weight of the new edge minus that of the previous one), is $\frac{1}{2}$-competitive.

Pinning down the optimal competitive ratio for the edge-weighted online bipartite matching problem between $\frac{1}{2}$ and $1 - \frac{1}{e}$ has been an open problem since at least \citet{FeldmanKMMP/WINE/2009}. 
They study a variant that counts the heaviest $n > 1$ edges matched to each offline vertex; 
it can be viewed as a fractional version of the problem.


\cite{Zadimoghaddam/arXiv/2017} introduces an algorithm and gives a competitive ratio that beats the $\frac{1}{2}$ barrier; 
however, it is difficult to fully understand the algorithm and its analysis even for experts.
We seek to provide an alternative and more accessible exposition of the algorithm and analysis by \cite{Zadimoghaddam/arXiv/2017}, fixing any minor bugs therein.

To this end, this article identifies one technical ingredient in the original paper that we believe to be the key of breaking the $\frac{1}{2}$ barrier.
This ingredient is the only thing that we keep from the original paper by \cite{Zadimoghaddam/arXiv/2017}, with minor changes;
we take the liberty to change everything else wherever we see fit to improve the exposition.
This article is devoted to introducing this technical ingredient and explaining how it works in the special case of unweighted graphs, i.g., all edges have unit weight.
Generalization to the weighted case will be covered in a follow-up article.

Finally, due to the nature of this article, we will generally prioritize for simplicity of the algorithm and analysis rather than optimality of the competitive ratio.



\section{Online Primal Dual}
\label{sec:online-primal-dual}

This section presents a brief introduction of the online primal dual framework for analyzing the competitive ratios of online matching algorithms. 
It is widely used in the literature of online algorithms in general, and is developed for online matching in a series of papers by \cite{BuchbinderJN/ESA/2007}, \cite{DevanurJ/STOC/2012}, \cite{DevanurJK/SODA/2013}, \cite{WangW/ICALP/2015}, \cite{DevanurHKMY/TEAC/2016}, \cite{HuangKTWZZ/STOC/2018}, \cite{HuangTWZ/ICALP/2018}, \cite{HuangPTTWZ/SODA/2019}.
See \cite{DevanurJK/SODA/2013} for an in-depth discussion on this topic.

Recall that this article focuses on the special case of unweighted graphs.
Consider the following standard matching linear program (LP):
\begin{align*}
    \textrm{maximize} \quad & \sum_{(i,j) \in E} x_{ij} \\
    \textrm{subject to} \quad & \sum_{j \in N(i)} x_{ij} \le 1 && \forall i \in L \\
    & \sum_{i \in N(j)} x_{ij} \le 1 && \forall j \in R \\[.5ex]
    & x_{ij} \ge 0 && \forall (i,j) \in E
\end{align*}

In the offline problem, $x_{ij}$ is the indicator of whether edge $(i,j)$ is in the matching.
Recall that we allow an offline vertices to be matched multiple times in the online problem, but each offline vertex contributes at most $1$ in the objective.
Further, the online algorithm may be randomized.
To this end, we may informally interpret $x_{ij}$ as the probability that edge $(i,j)$ is the first edge matched to vertex $i$ by the online algorithm.

Next, consider the following dual LP:
\begin{align*}
    \textrm{minimize} \quad & \sum_{i \in L} \alpha_i + \sum_{j \in R} \beta_j \\
    \textrm{subject to} \quad & \alpha_i + \beta_j \ge 1 && \forall (i,j) \in E \\[1ex]
    & \alpha_i \ge 0 && \forall i \in L \\[1ex]
    & \beta_j \ge 0 && \forall j \in R
\end{align*}

Let $\primal$ and $\dual$ denote the values of the primal and dual objectives respectively.
Let $\alg$ denote the expected objective of the algorithm.
An online primal dual analysis maintains a pair of nonnegative primal and dual assignments at all time satisfying a set of conditions formulated in the next lemma.

\begin{lemma}
    \label{lem:online-primal-dual}
    Suppose for some $0 < \Gamma \le 1$, the following conditions hold at all time:
    \begin{enumerate}
        \item (Objectives) $\alg \ge \primal \ge \dual$.
        \item (Approximate Dual Feasibility) For any edge $(i, j) \in E$ that has arrived, we have:
            \[
                \alpha_i + \beta_j \ge \Gamma
                ~.
            \]
    \end{enumerate}
    Then, the algorithm is $\Gamma$-competitive.
\end{lemma}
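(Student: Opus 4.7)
The plan is to run the standard online primal dual argument: show that dividing the final dual assignment by $\Gamma$ yields a feasible dual solution of the matching LP, and then chain weak LP duality with the assumed $\alg \ge \dual$ inequality.

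First, I would fix the state at the end of the input sequence, so that every edge of $G$ has ``arrived.'' By the Approximate Dual Feasibility condition, at this moment every edge $(i,j) \in E$ satisfies $\alpha_i + \beta_j \ge \Gamma$. Since the $\alpha_i$ and $\beta_j$ are nonnegative by hypothesis, the rescaled assignment $\alpha_i' := \alpha_i/\Gamma$ and $\beta_j' := \beta_j/\Gamma$ is a feasible solution to the dual LP: the edge constraint $\alpha_i' + \beta_j' \ge 1$ holds everywhere, and nonnegativity is preserved.

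Next, I would invoke weak LP duality for the matching LP. The value of the rescaled feasible dual, $\sum_i \alpha_i' + \sum_j \beta_j' = \dual/\Gamma$, upper bounds the optimal primal value of the matching LP, which in turn upper bounds the offline optimal matching $\mathrm{OPT}$. Hence $\mathrm{OPT} \le \dual/\Gamma$, i.e., $\dual \ge \Gamma \cdot \mathrm{OPT}$.

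Finally, chaining with the Objectives hypothesis $\alg \ge \primal \ge \dual$ gives $\alg \ge \Gamma \cdot \mathrm{OPT}$, which is exactly the statement that the algorithm is $\Gamma$-competitive. There is no real obstacle here; the only conceptual point worth emphasizing is why the Approximate Dual Feasibility condition, which the lemma states ``at all time'' for edges that have arrived, suffices to yield a feasible dual for the \emph{entire} graph: we simply apply it at the final time, at which point all edges of $E$ are present and the rescaling by $1/\Gamma$ converts the approximate constraint into the exact LP constraint.
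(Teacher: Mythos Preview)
Your proposal is correct and follows essentially the same argument as the paper: apply the conditions at the final time, rescale the dual by $1/\Gamma$ to obtain a feasible dual, invoke weak LP duality to bound the optimum, and chain with $\alg \ge \dual$. The only difference is that you spell out the nonnegativity and the LP-relaxation-versus-integral-optimum step more explicitly than the paper does.
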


\begin{proof}
    Since the conditions hold at all time, they hold at the end of the algorithm. 
    By the second condition, dividing the dual assignment by $\Gamma$ gives a feasible dual.
    Hence, $\Gamma^{-1} \cdot \dual$ is the objective of a feasible dual.
    By weak duality of LP, it is an upper bound of the optimal.
    Putting together with the first condition proves the lemma.
\end{proof}

%
A canonical online primal dual analysis usually sets the primal faithfully according to the matching maintained by the online algorithm:
let $x_{ij}$ be the probability that edge $(i,j)$ is the first edge matched to vertex $j$.
Then, we have the first inequality in the objective condition holds with equality, i.e., $\alg = \primal$.
Further, the typical way of setting the dual can be viewed as a gain splitting process.
Whenever an edge $(i,j)$ is chosen by the online algorithm and it is the first edge matched to the offline vertex $i$, the primal objective increases by $1$ and, thus, a total gain of $1$ is split between $\alpha_i$ and $\beta_j$.
That is, $\alpha_i$ and $\beta_j$ are increase by their respectively amounts which sum to $1$.
If the algorithm is randomized, the dual variables will be the expected share that the vertices get in the gain splitting process.
The gain splitting process needs to be tailored for the problem and algorithm at hand to satisfy the approximate dual feasibility condition.
Then, the increment in the primal objective always matches that in the dual objective and, thus, the second inequality in the objective condition also holds with equality, i.e., $\primal = \dual$.

%
\section{Thought Experiments}
\label{sec:thought-experiment}

This section presents two failing algorithms as thought experiments in order to build up the intuition that leads to the final algorithm.
The online primal dual analysis in this section follows the aforementioned canonical format.

\subsection{Two-choice Greedy}
\label{sec:nonadaptive-greedy}

Let us start with a simple greedy algorithm that always matches to the neighbor(s) with the largest marginal gain, which equals the probability that the neighbor is still unmatched, over the randomness in previous rounds.
To be consistent with the subsequent discussions, we consider a greedy algorithm that, in the presence of at least two neighbors both with the largest marginal gain, picks two of them as candidates, say, lexicographically, and matches to them with equal probabilities.
We say that such a round that manages to find two candidates is \emph{randomized}; 
a round that finds only one candidate is \emph{deterministic}.
If all neighbors have already been matched with certainty, it is an \emph{unmatched} round.

On the one hand, the marginal gain of a vertex $i \in L$ becomes $0$ once it gets involved in a deterministic round and, thus, is no longer relevant in future rounds.
On the other hand, the marginal gain of a vertex $i \in L$ decreases by half every time it gets involved in a randomized round.
Hence, the neighbors with the largest marginal gain are those that are never involved in any deterministic round, and are involved in the fewest number of randomized rounds.

See Algorithm~\ref{alg:two-choice-greedy} for a formal definition.

\begin{algorithm}[t]
    \caption{Two-choice Greedy}
    \label{alg:two-choice-greedy}
    \begin{algorithmic}
        \medskip
        \STATEx \textbf{State variables:}
        \begin{itemize}
            \item $k_i \ge 0$, number of randomized rounds in which an offline vertex $i$ has been chosen;
            $k_i = \infty$ if it has been chosen in a deterministic round.
        \end{itemize}
        \smallskip
        \STATEx \textbf{On the arrival of an online vertex $j \in R$:}
        \begin{enumerate}
            \item Let $B(j) = \argmin_{i \in N(j)} k_i$ be the set of neighbors of $j$ with the smallest \emph{finite} $k_i$.
            \item If $|B(j)| \ge 2$, pick two neighbors $i_1, i_2 \in B(j)$, say, lexicographically;
            match $j$ to $i_1$ or $i_2$ each with probability $\frac{1}{2}$.
            \hspace*{\fill}
            \textbf{(randomized round)}
            \item If $|B(j)| = 1$, let $i \in B(j)$;
            match $j$ to $i$.
            \hspace*{\fill}
            \textbf{(deterministic round)}
            \item If $|B(j)| = 0$, leave $j$ unmatched.
            \hspace*{\fill}
            \textbf{(unmatched round)}
            \item Update $k_i$'s accordingly.
        \end{enumerate}
    \end{algorithmic}
    \smallskip
\end{algorithm}

\begin{theorem}
    \label{thm:greedy}
    Two-choice greedy is $\frac{1}{2}$-competitive and it is tight for the algorithm.
\end{theorem}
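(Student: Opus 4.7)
The plan is to invoke Lemma~\ref{lem:online-primal-dual} with $\Gamma = \tfrac{1}{2}$, following the canonical primal--dual template of Section~\ref{sec:online-primal-dual}. I set $x_{ij}$ to be the probability that $(i,j)$ is the first edge matched to offline vertex $i$, so $\alg = \primal$ holds automatically. For the dual I propose the following gain-splitting rule, applied each time $j$ arrives: in a randomized round with candidates $i_1, i_2$ at common value $k := k_{i_1} = k_{i_2}$, increment $\beta_j$ by $2^{-(k+1)}$ and each of $\alpha_{i_1}, \alpha_{i_2}$ by $2^{-(k+2)}$; in a deterministic round matching $i^*$ with $k_{i^*} = k$, increment $\beta_j$ by $2^{-(k+1)}$ and $\alpha_{i^*}$ by $2^{-(k+1)}$; in an unmatched round do nothing. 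In every case the dual increment equals the primal gain $2^{-k}$, so $\primal = \dual$ throughout. A convenient side effect is the invariant $\alpha_i = \tfrac{1}{2}\,\Pr[i\text{ matched so far}]$, which equals $\tfrac{1}{2}(1-2^{-k_i})$ while $k_i$ is finite and $\tfrac{1}{2}$ after $i$ is deterministically matched.

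The core work is the approximate dual feasibility check $\alpha_i + \beta_j \ge \tfrac{1}{2}$ for every arrived edge $(i,j)$. I would case-split on $j$'s round type (unmatched, deterministic, randomized), and within the matched cases on $i$'s location (one of the picked candidates, in $B(j) \setminus \{i_1, i_2\}$, or in $N(j) \setminus B(j)$). Each case is a routine plug-in of the invariant. The unmatched case is immediate because all $i \in N(j)$ then satisfy $k_i = \infty$, giving $\alpha_i = \tfrac{1}{2}$. The tight case is a randomized round with $k = 0$ and $i \in B(j) \setminus \{i_1, i_2\}$: here $\alpha_i = 0$ and $\beta_j = \tfrac{1}{2}$, and the constraint holds with equality. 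Every other case gives a strict inequality, with slack at least $2^{-(k+2)}$. Combining with Lemma~\ref{lem:online-primal-dual} yields $\tfrac{1}{2}$-competitiveness.

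For the tightness (upper bound) direction, I would exhibit a family of instances on which the competitive ratio approaches $\tfrac{1}{2}$. The natural starting gadget — one large-neighborhood $j$ adjacent to $\{i_1, i_2, i_3, \ldots\}$ followed by two deterministic pickoffs of the two lexicographically chosen candidates $i_1, i_2$ — only yields ratio $\tfrac{2}{3}$ per copy, and chaining disjoint copies preserves this ratio. Reaching $\tfrac{1}{2}$ requires engineering many offline-optimum edges to line up with the tight feasibility case identified above, namely an edge from $j$ to a fresh vertex in $B(j) \setminus \{i_1, i_2\}$; designing such a chained construction and arguing its ratio is the main obstacle of the proof, whereas the $\tfrac{1}{2}$-competitiveness half is essentially mechanical once the gain-splitting rule is fixed.
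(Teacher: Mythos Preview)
Your lower-bound argument is correct and is essentially identical to the paper's: the same gain-splitting rule, the same invariant $\alpha_i = \tfrac{1}{2}(1-2^{-k_i})$, and the same case analysis for approximate dual feasibility. Nothing to add there.

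The genuine gap is the tightness direction, which you yourself flag as the ``main obstacle'' but do not resolve. The paper does supply an explicit family. Fix the tie-breaking to be \emph{reverse} lexicographic and take $n = 3^k$ vertices on each side. The first $n/3$ online vertices are adjacent to all offline vertices; the algorithm matches them in randomized rounds to the last $2n/3$ offline vertices, leaving the first $n/3$ offline vertices (their partners in the intended perfect matching) untouched. Recurse: the next $\tfrac{1}{3}\cdot\tfrac{2n}{3}$ online vertices are adjacent to the last $2n/3$ offline vertices, and the algorithm again pushes them to the last $(2/3)^2 n$ offline vertices, and so on; finally a perfect matching sits on the last $2^k$ vertices. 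There is a global perfect matching, so $\mathrm{OPT}=n$, while the algorithm's expected number of matched offline vertices is, up to a lower-order $n^{\log_3 2}$ term,
\[
\Bigl(1\cdot\tfrac{1}{3} + \tfrac{1}{2}\cdot\tfrac{2}{9} + \tfrac{1}{4}\cdot\tfrac{4}{27} + \cdots\Bigr)\,n
\;=\; \Bigl(\tfrac{1}{3}+\tfrac{1}{9}+\tfrac{1}{27}+\cdots\Bigr)\,n \;=\; \tfrac{n}{2}.
\]
This is exactly the ``chained'' construction you were groping toward: at every level the offline optimum uses the edge from $j$ to a fresh vertex in $B(j)\setminus\{i_1,i_2\}$, so the dual constraint is tight for \emph{every} optimal edge. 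The scaling $1/3$--$2/3$ (rather than, say, $1/2$--$1/2$) is what makes the recursion balance and the geometric series collapse to $\tfrac{1}{2}$.
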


\subsubsection{Competitive Ratio: Lower Bound}

Experts may skip this subsection.
It merely serves as a warm-up case for other readers to get familiar with the online primal dual framework.

Consider the following online primal dual analysis.
At all time, let $x_{ij}$ be the probability that edge $(i,j)$ is the first edge matched to vertex $i$ by the algorithm. 
Let $x_i = \sum_{j \in N(i)} x_{ij}$ be the probability that an offline vertex $i$ is matched. 
For each online vertex $j \in R$, update the primal and dual variables as follows by splitting the gain equally between the two endpoints for every edge ($x_{ij}$'s are $0$ by default unless stated otherwise).
\begin{itemize}
    \item \textbf{Randomized rounds:~} Suppose it is a randomized round, with $\kmin = \min_{i \in N(j)} k_i$ at the time of the match. 
    Suppose $i_1$ and $i_2$ are the chosen neighbors.
    Then, let $x_{ij}$ be $2^{-\kmin-1}$ and, thus, increase $x_i$ by the same amount, for $i = i_1, i_2$.
    Increase $\alpha_i$ by $2^{-\kmin-2}$ for $i = i_1, i_2$, and let $\beta_j = 2^{-\kmin-1}$.
    \item \textbf{Deterministic rounds:~} Suppose it is a deterministic round, with $\kmin = \min_{i \in N(j)} k_i$ at the time of the match, and $i$ is the chosen neighbor.
    Then, let $x_{ij}$ be $2^{-\kmin}$ and, thus, increase $x_i$ by the same amount (after which it equals $1$). 
    Increase $\alpha_i$ by $2^{-\kmin-1}$ (after which it equals $\frac{1}{2}$), and let $\beta_j = 2^{-\kmin-1}$.
    \item \textbf{Unmatched rounds:~} Suppose it is an unmatched round. 
    Then, let $\beta_j = 0$.
\end{itemize}

The next few lemmas follow straightforwardly by how the variables are updated above.

\begin{lemma}
    For any offline vertex $i$, we have at all time:
    \[
        x_i = 1 - 2^{-k_i}
        ~.
    \]
\end{lemma}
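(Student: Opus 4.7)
The plan is to prove the identity by induction on the number of online rounds processed. The base case is immediate: before any round, $k_i = 0$ and $x_i = 0 = 1 - 2^{-0}$ for every offline vertex $i \in L$.

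For the inductive step, I fix an arriving online vertex $j$ and assume the invariant holds just before $j$ is processed. Only vertices chosen in round $j$ can change state, so it suffices to check those. The key observation that makes everything go through is that, by the definition $B(j) = \argmin_{i \in N(j)} k_i$, any chosen neighbor $i$ must satisfy $k_i = \kmin$ at the moment of the match; combined with the inductive hypothesis this gives $x_i = 1 - 2^{-\kmin}$ just before the update.

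From here it is a short case analysis matching the three update rules. In a randomized round, the chosen $i_1, i_2$ each gain $2^{-\kmin-1}$ in $x_i$ and have $k_i$ incremented by one, so $x_i$ becomes $1 - 2^{-\kmin} + 2^{-\kmin-1} = 1 - 2^{-\kmin-1}$, matching $1 - 2^{-k_i}$ with the updated $k_i$. In a deterministic round, the chosen $i$ gains $2^{-\kmin}$ in $x_i$, reaching exactly $1$, while $k_i$ is set to $\infty$; interpreting $2^{-\infty} = 0$ then gives $1 - 2^{-k_i} = 1$. Unmatched rounds change neither side. There is no real obstacle here: the coefficients $2^{-\kmin-1}$ and $2^{-\kmin}$ in the update rule were engineered so that the recursion telescopes, and the only subtlety is to read $\kmin$ at the right instant, namely just before the update, so that it coincides with $k_i$ for each chosen neighbor.
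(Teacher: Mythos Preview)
Your proof is correct and is precisely the straightforward case-by-case verification that the paper has in mind; the paper does not spell out a proof and simply remarks that the lemma ``follows straightforwardly by how the variables are updated above.'' Your induction on rounds, checking the randomized, deterministic, and unmatched cases, is exactly that verification.
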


\begin{lemma}
    \label{lem:greedy-alpha}
    For any offline vertex $i$, we have at all time:
    \[
        \alpha_i = \frac{1 - 2^{-k_i}}{2}
        ~.
    \]
\end{lemma}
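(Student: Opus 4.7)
The plan is a straightforward induction on the sequence of online arrivals, verifying the invariant $\alpha_i = \tfrac{1-2^{-k_i}}{2}$ is preserved by each update prescribed in the gain-splitting rules above. The base case is trivial: before any arrival, every $\alpha_i$ is $0$ and every $k_i$ is $0$, so both sides of the claimed identity equal $0$.

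For the inductive step, I would fix an offline vertex $i$ and consider what can happen when an online vertex $j$ arrives. There are really only three cases that affect $(\alpha_i, k_i)$: (a) $i$ is selected as one of the two candidates in a randomized round, (b) $i$ is the unique candidate in a deterministic round, and (c) $i$ is not touched (no change, so the invariant is preserved trivially). The crucial observation is that whenever $i$ is involved as a candidate, $i \in B(j)$, and therefore $k_i = \kmin$ at the moment of the match; this is what synchronizes the exponents in the update rules with those in the target formula.

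For case (a), $k_i$ increases from $k_i$ to $k_i + 1$, so the target value changes from $\tfrac{1-2^{-k_i}}{2}$ to $\tfrac{1-2^{-(k_i+1)}}{2}$, an increase of exactly $2^{-k_i-2} = 2^{-\kmin-2}$, which is precisely the increment applied to $\alpha_i$. For case (b), $k_i$ becomes $\infty$, so the target value becomes $\tfrac{1}{2}$; the required increment is $\tfrac{1}{2} - \tfrac{1-2^{-k_i}}{2} = 2^{-k_i-1} = 2^{-\kmin-1}$, again matching the rule.

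There is no real obstacle here; the argument is just a matter of checking two arithmetic identities. The only conceptual point worth stating explicitly is the identification $k_i = \kmin$ when $i \in B(j)$, which ensures that the update rules, phrased in terms of $\kmin$, agree with the per-vertex invariant, phrased in terms of $k_i$.
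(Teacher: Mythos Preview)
Your proposal is correct and is exactly the straightforward verification the paper has in mind: the paper does not give an explicit proof, merely stating that the lemma ``follows straightforwardly by how the variables are updated above,'' and your induction spells out precisely this. The only point you make explicit that the paper leaves implicit is the identification $k_i = \kmin$ whenever $i \in B(j)$, which is immediate from the definition of $B(j)$.
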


\begin{lemma}
    The primal and dual objectives are equal at all time.
\end{lemma}

It remains to analyze approximate dual feasibility, as stated in the next lemma.

\begin{lemma}
    For any edge $(i,j) \in E$, we have the following at the end of the algorithm:
    \[
        \alpha_i + \beta_j \ge \frac{1}{2}
        ~.
    \]
\end{lemma}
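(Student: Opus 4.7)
The plan is to combine Lemma~\ref{lem:greedy-alpha} with a case analysis on the final value of $k_i$. By Lemma~\ref{lem:greedy-alpha}, $\alpha_i$ depends only on the final $k_i$, while $\beta_j$ is determined by $\kmin$ at the time $j$ arrives; so the task is to relate these two quantities through the evolution of $k_i$.

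The easy case is $k_i = \infty$ at the end, meaning $i$ was matched in some deterministic round. Then $\alpha_i = \frac{1}{2}$, and the inequality holds trivially since $\beta_j \ge 0$.

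The nontrivial case is when $k_i$ is finite at the end. The key observation is that $k_i$ is monotone in the natural sense: it can only move from a finite value to a larger finite value or to $\infty$, but never back. Hence, if its final value $k_i$ is finite, then at every earlier moment---in particular, at the moment $j$ arrived---the vertex $i$ had a finite counter bounded above by the final $k_i$. Consequently, $i$ was a candidate with a finite counter when $j$ arrived, which rules out the unmatched round (where $|B(j)| = 0$) and forces $\kmin \le k_i$ at that time. From the update rule, $\beta_j = 2^{-\kmin - 1} \ge 2^{-k_i - 1}$ in both the randomized and deterministic cases. Combined with $\alpha_i = \frac{1 - 2^{-k_i}}{2}$ from Lemma~\ref{lem:greedy-alpha}, we obtain
\[
    \alpha_i + \beta_j \ge \frac{1 - 2^{-k_i}}{2} + 2^{-k_i - 1} = \frac{1}{2},
\]
closing the case.

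The only subtle step is the monotonicity argument: one must justify that a finite final $k_i$ implies $i$ was never ``retired'' to $\infty$ before $j$'s arrival, so that $i$ genuinely participated in the $\argmin$ when $j$ came. Everything else is routine algebra matching the gain-splitting fractions chosen in the dual updates. No additional structural facts about the algorithm are needed.
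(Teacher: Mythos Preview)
Your argument is correct and is essentially the paper's proof, just reframed. The paper fixes the arrival-time value of $k_i$ and uses that the final $\alpha_i$ is at least the arrival-time $\alpha_i$; you fix the final value of $k_i$ and use monotonicity of $k_i$ to push the bound on $\kmin$ up to the final $k_i$. Both reduce to the same identity $\tfrac{1-2^{-k_i}}{2}+2^{-k_i-1}=\tfrac12$ together with $\kmin\le k_i$, and both dispose of the $k_i=\infty$ case by $\alpha_i=\tfrac12$ alone.
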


\begin{proof}
    Consider the moment when vertex $j$ arrives and the value of $k_i$ then.
    By Lemma~\ref{lem:greedy-alpha}, we have:
    \[
        \alpha_i = \frac{1 - 2^{k_i}}{2}
        ~.
    \]
    
    Further, let $\kmin = \argmin_{i' \in N(j)} k_{i'}$ at the time.
    Note that $\kmin \le k_i$. 
    If it is a randomized round or a deterministic round, we have:
    \[
        \beta_j = 2^{-\kmin-1} \ge 2^{-k_i-1}
        ~.
    \]
    
    Thus, together with the above bound of $\alpha_i$, we get the inequality stated in the lemma.
    
    If it is an unmatched round, we have $k_i = \kmin = \infty$, and $\beta_i = 0$.
    The contribution from $\alpha_i$ alone satisfies the inequality stated in the lemma.
\end{proof}

\subsubsection{Competitive Ratio: Upper Bound}

For ease of presentation, we consider a version of the algorithm which picks neighbors in the reverse lexicographical order in randomized rounds.
Consider the following instance.

\begin{example}
    Consider a bipartite graph with $n = 3^k$ vertices on both sides for some large integer $k$.
    The first $\frac{n}{3} = 3^{k-1}$ online vertices are connected to all offline vertices.
    Then, the first one third of the remaining online vertices, i.e., $\frac{1}{3} \cdot \frac{2n}{3} = 2 \cdot 3^{k-2}$ of them in total, are connected to the last $\frac{2n}{3} = 2 \cdot 3^{k-1}$ offline vertices.
    In general, for any $0 \le i < k$, the first one third of the last $\big( \frac{2}{3} \big)^i n = 2^i \cdot 3^{k-i}$ vertices, i.e., $2^i \cdot 3^{k-i-1}$ of them in total, are connected to the last $\big( \frac{2}{3} \big)^i n = 2^i \cdot 3^{k-i}$ offline vertices.
    Finally, let there be a perfect matching between the last $2^k$ offline vertices and the last $2^k$ online vertices.

\end{example}

First, note that there is a perfect matching, with the $i$-the online vertex matching to the $i$-th offline vertex.
Hence, the optimal is $n$.

Next, consider the performance of the online algorithm.
The first $\frac{n}{3} = 3^{k-1}$ vertices are connected to all offline vertices.
They are matched to the last $\frac{2}{3}$ fraction of the offline vertices in randomized rounds.
That is, their correct neighbors in the perfect matching are left unmatched, while other offline vertices are matched by half.
Then, the first one third of the remaining online vertices, i.e., $\frac{1}{3} \cdot \frac{2n}{3} = 2 \cdot 3^{k-2}$ of them in total, are matched to the last $(\frac{2}{3})^2$ fraction of the offline vertices in randomized rounds.
That is, their correct neighbors in the perfect matching are left matched by only half, while the correct neighbors of subsequent online vertices are now matched by three quarters.
The argument goes on recursively.

Therefore, omitting a lower order term due to the last $2^k = n^{\log_3 2}$ vertices on both sides, the expected size of the matching is:
\begin{align*}
    \bigg( 1 \cdot \frac{1}{3} + \frac{1}{2} \cdot \frac{2}{9} + \cdots + \bigg(\frac{1}{2}\bigg)^k \cdot \frac{2^k}{3^{k+1}} + \cdots \bigg) n 
    & 
    = \bigg( \frac{1}{3} + \frac{1}{9} + \cdots + \frac{1}{3^{k+1}} + \cdots \bigg) n \\[1ex]
    & 
    = \frac{n}{2}
    ~.
\end{align*}

\subsection{Greedy with Perfect Negative Correlation}

Next, we consider an imaginary version of the two-choice greedy algorithm that is almost identical to the original version, except that the random bits associated with the same offline vertex in different randomized rounds have perfect negative correlation:
if an offline vertex $i$ is not matched the first time it is chosen in a randomized round, it will be matched the second time. 

It is in general impossible to achieve such perfect negative correlations in the online setting. 
See Section~\ref{sec:perfect-correlation-infeasible} for a brief discussion.
Nevertheless, this subsection presents an online primal dual analysis of this algorithm assuming its feasibility as a thought experiment, to demonstrate that negative correlations lead to a competitive ratio better than $\frac{1}{2}$.

\begin{theorem}
    \label{thm:perfect-correlation}
    Two-choice greedy with perfect negative correlation, if feasible, is $\frac{5}{9}$-competitive and it is tight for the algorithm.
\end{theorem}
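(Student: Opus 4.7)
The plan is to follow the canonical online primal dual framework from Section~\ref{sec:online-primal-dual} exactly as for two-choice greedy in Section~\ref{sec:nonadaptive-greedy}, but with a finer gain splitting that exploits the perfect negative correlation. Under this assumption, an offline vertex $i$ is matched with marginal probability $\min(k_i / 2, 1)$, so once $k_i = 2$ the vertex is fully matched and effectively leaves the pool of candidates; it suffices to consider $k_i \in \{0, 1, 2, \infty\}$, where $\infty$ records that $i$ was matched in a deterministic round. I would set $x_{ij}$ so that at all times $\sum_{j \in N(i)} x_{ij} = \min(k_i/2, 1)$, which gives $x_{ij} = \tfrac{1}{2}$ for each of the two chosen neighbors in a randomized round (for either $\kmin \in \{0, 1\}$), $x_{ij} = 1$ in a deterministic round at $\kmin = 0$, and $x_{ij} = \tfrac{1}{2}$ in a deterministic round at $\kmin = 1$. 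Hence $\alg = \primal$, and the primal gain per round is $1$ in every matched round except deterministic rounds at $\kmin = 1$, where the gain is $\tfrac{1}{2}$.

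For the dual, I would parametrize the gain splitting by four constants $a, b, c, d$, where $\alpha_i$ of each chosen vertex increases by $a$ (resp.\ $b$, $c$, $d$) in a randomized round at $\kmin = 0$ (resp.\ randomized at $\kmin = 1$, deterministic at $\kmin = 0$, deterministic at $\kmin = 1$), and $\beta_j$ receives the remaining primal gain. Maintaining $\primal = \dual$ is then automatic. The main technical step is verifying approximate dual feasibility at $\Gamma = \tfrac{5}{9}$. For every edge $(i, j) \in E$ one considers the round type of $j$, the value of $\kmin$, whether $i$ was among the candidates of $j$'s round, and the state $k_i$ at the moment $j$ arrived; each combination yields a lower bound on $\alpha_i + \beta_j$. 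The binding constraints at $\Gamma = \tfrac{5}{9}$ are: (i) a randomized round at $\kmin = 0$ with a neighbor $i$ at $k_i = 0$ not chosen, which forces $1 - 2a \ge \tfrac{5}{9}$; (ii) a randomized round at $\kmin = 1$ with a neighbor $i$ at $k_i = 1$ not chosen, which forces $a + 1 - 2b \ge \tfrac{5}{9}$; and (iii) an unmatched round where $\beta_j = 0$, which forces the final $\alpha_i$ to be at least $\tfrac{5}{9}$ no matter which history led $i$ to be fully matched, giving three simultaneous constraints $a + b \ge \tfrac{5}{9}$, $c \ge \tfrac{5}{9}$, and $a + d \ge \tfrac{5}{9}$. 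Setting $a = \tfrac{2}{9}$, $b = \tfrac{1}{3}$, $c = \tfrac{5}{9}$, and $d = \tfrac{1}{3}$ saturates all of these tightly, and a short check confirms that the remaining cases (where $i$ is a chosen candidate or has a higher $k_i$) hold with slack.

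For the tight instance I would take $n = 9m$ vertices on each side and arrange the online vertices into three waves: wave~0 consists of $n/3$ vertices adjacent to all offline vertices; wave~1 consists of $2n/9$ vertices adjacent to the last $2n/3$ offline vertices; and wave~2 consists of $4n/9$ vertices forming a perfect matching with the last $4n/9$ offline vertices. Running the algorithm with reverse lexicographic tie breaking, wave~0 advances the last $2n/3$ offline vertices from $k_i = 0$ to $k_i = 1$ and contributes $n/3$ to the matching; wave~1 advances the last $4n/9$ of those further from $k_i = 1$ to $k_i = 2$, which under perfect negative correlation means fully matched, contributing $2n/9$; and wave~2 then finds that all of its neighbors are fully matched and produces only unmatched rounds. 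The algorithm's expected matching size is therefore $n/3 + 2n/9 = 5n/9$, while the graph contains a perfect matching of size $n$, giving a ratio of exactly $\tfrac{5}{9}$.

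The main obstacle is the case analysis for approximate dual feasibility, particularly the unmatched round case. A fully matched vertex $i$ can reach that state along three different histories --- two randomized rounds ($\alpha_i = a + b$), a deterministic round from $k_i = 0$ ($\alpha_i = c$), or one randomized round followed by one deterministic round from $k_i = 1$ ($\alpha_i = a + d$) --- and all three values must be at least $\tfrac{5}{9}$ in order to handle edges incident to $i$ in subsequent unmatched rounds. Combined with the upper bounds on $a$ and on $2b - a$ imposed by the two randomized round cases, these form a linear system that is feasible exactly at $\Gamma = \tfrac{5}{9}$, explaining why the analysis cannot push beyond $\tfrac{5}{9}$ and matching the tight instance.
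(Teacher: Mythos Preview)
Your proposal is correct and follows essentially the same approach as the paper. Your parameters $(a,b,c,d)=(\tfrac{2}{9},\tfrac{1}{3},\tfrac{5}{9},\tfrac{1}{3})$ coincide exactly with the paper's dual increments $\tfrac{1-\Gamma}{2},\ \tfrac{3(1-\Gamma)}{4},\ \Gamma,\ \tfrac{3(1-\Gamma)}{4}$ at $\Gamma=\tfrac{5}{9}$, and your tight instance is the paper's $9$-vertex example scaled by a factor~$m$; the only cosmetic difference is that you derive the constants by solving the binding constraints, whereas the paper states them directly in terms of~$\Gamma$.
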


\subsubsection{Competitive Ratio: Lower Bound}

Let $\Gamma = \frac{5}{9}$.
We will show that greedy with perfect negative correlation is $\Gamma$-competitive via an online primal dual argument as follows.
Let $x_{ij}$ be the probability that edge $(i,j)$ is the first edge matched to vertex $i$ in the algorithm. 
Let $x_i = \sum_{j \in N(i)} x_{ij}$ be the probability that an offline vertex $i$ is matched. 
For each online vertex $j \in R$, update the primal and dual variables as follows.
\begin{itemize}
    \item \textbf{Randomized rounds:~} Suppose it is a randomized round, with $\kmin = \min_{i \in N(j)} k_i$ at the time of the match, and $i_1$ and $i_2$ are the chosen neighbors.
    Then, let $x_{ij}$ be $\frac{1}{2}$ and, thus, increase $x_i$ by the same amount, for $i = i_1, i_2$.
    If $\kmin = 0$, i.e., it is the first time $i_1$ and $i_2$ are chosen in randomized rounds, increase $\alpha_i$ by $\frac{1-\Gamma}{2}$ for $i = i_1, i_2$, and let $\beta_j = \Gamma$.
    If $\kmin = 1$, i.e., it is the second time $i_1$ and $i_2$ are chosen in randomized rounds, increase $\alpha_i$ by $\frac{3(1 - \Gamma)}{4}$ for $i = i_1, i_2$, and let $\beta_j = \frac{3 \Gamma - 1}{2}$.
    \item \textbf{Deterministic rounds:~} Suppose it is a deterministic round, with $\kmin = \min_{i \in N(j)} k_i$ at the time of the match, and $i$ is the chosen neighbor.
    Then, let $x_{ij}$ be $2^{-\kmin}$ and, thus increase $x_i$ by the same amount (after which it equals $1$). 
    If $\kmin = 0$, i.e., vertex $i$ has never been chosen in randomized rounds, increase $\alpha_i$ by $\Gamma$, and let $\beta_j = 1 - \Gamma$.
    If $\kmin = 1$, i.e., vertex $i$ was chosen in a randomized round before $j$'s arrival, increase $\alpha_i$ by $\frac{3(1-\Gamma)}{4}$, and let $\beta_j = \frac{3\Gamma-1}{4}$.
    \item \textbf{Unmatched rounds:~} Suppose it is an unmatched round. 
    Then, let $\beta_j = 0$.
\end{itemize}

Note that for any offline vertex $i$, the value of $x_i$ is either $0$, or $\frac{1}{2}$, or $1$.
Concretely, $x_i = 0$ if it has never been chosen in any randomized or deterministic rounds;
$x_i = \frac{1}{2}$ if it is chosen in exactly one randomized round but not in any deterministic rounds;
$x_i = 1$ if either it is chosen in two randomized rounds, or it is chosen in a deterministic round.

We first present the lower bounds of $\alpha_i$ for different values of $x_i$.

\begin{lemma}
    For any offline vertex $i$:
    \[
        \alpha_i = 
        \begin{cases}
            0 & \text{\rm if $x_i = 0$;} \\
            \frac{1-\Gamma}{2} & \text{\rm if $x_i = \frac{1}{2}$;} \\
            \Gamma & \text{\rm if $x_i = 1$.}
        \end{cases}
    \]
\end{lemma}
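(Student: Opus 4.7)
The plan is to carry out a case analysis on how vertex $i$ has been chosen by the algorithm, since the sequence of choices simultaneously determines $x_i$ and the sequence of increments that have been applied to $\alpha_i$. One preliminary observation that I will use throughout is that whenever $i$ is chosen in some round, the value of $\kmin$ in that round equals $k_i$ at that moment, because $i \in B(j) = \argmin_{i' \in N(j)} k_{i'}$ by definition. Hence the $\kmin$ in each of the update rules for $\alpha_i$ matches $k_i$ just before that round, which simplifies bookkeeping considerably.

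For the easy cases, $x_i = 0$ means $i$ has never been picked, so $\alpha_i$ remains at its initial value $0$. If $x_i = \tfrac{1}{2}$, then by the classification stated above the lemma, $i$ was chosen in exactly one randomized round and no deterministic round; that was its first selection, so $k_i = 0$ beforehand and therefore $\kmin = 0$, giving a single increment of $\frac{1-\Gamma}{2}$ as claimed.

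The main case is $x_i = 1$, which splits into three scenarios: (a) $i$ is chosen in a deterministic round only, which forces $\kmin = k_i = 0$; (b) $i$ is first chosen in a randomized round and then in a deterministic round; (c) $i$ is chosen in two randomized rounds, after which perfect negative correlation guarantees it has been matched and it is no longer selected. Scenario (a) contributes $\alpha_i = \Gamma$ directly. In scenarios (b) and (c) the first selection is a randomized round with $\kmin = 0$, contributing $\frac{1-\Gamma}{2}$, and the second selection is either randomized or deterministic with $\kmin = 1$; inspecting the update rules shows that both of those options contribute the same amount $\frac{3(1-\Gamma)}{4}$. Hence in (b) and (c) alike, $\alpha_i = \frac{1-\Gamma}{2} + \frac{3(1-\Gamma)}{4} = \frac{5(1-\Gamma)}{4}$.

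The only thing left is the algebraic identity $\frac{5(1-\Gamma)}{4} = \Gamma$ at $\Gamma = \tfrac{5}{9}$, which is equivalent to $9\Gamma = 5$ and thus immediate. This identity is the whole reason the constants $\frac{1-\Gamma}{2}$ and $\frac{3(1-\Gamma)}{4}$ were chosen: they synchronize the total gain of $\alpha_i$ across the three different histories that drive $x_i$ to $1$. I do not foresee any real obstacle; the only thing worth stating carefully is that, under perfect negative correlation, a vertex reaching $x_i = 1$ through two randomized rounds is not chosen again, so no spurious $\kmin = 2$ increment can appear and corrupt the closed form.
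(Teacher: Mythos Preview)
Your proposal is correct and follows essentially the same case analysis as the paper's proof; the paper is just terser, dismissing the $x_i=0$ and $x_i=\tfrac{1}{2}$ cases as ``by definition'' and then computing $\frac{1-\Gamma}{2}+\frac{3(1-\Gamma)}{4}=\Gamma$ for the two-step histories exactly as you do. Your explicit observation that $\kmin=k_i$ whenever $i$ is selected, and your remark ruling out a stray $\kmin=2$ increment, are sound clarifications that the paper leaves implicit.
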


\begin{proof}
    The first two cases follow by definition.
    The last case also follows by definition if $x_i$ directly changes from $0$ to $1$ in a deterministic round.
    If $x_i$ first increases from $0$ to $\frac{1}{2}$ in a randomized round and then further increases to $1$ either in a deterministic round or a randomized round, the value of $\alpha_i$ is:
    \[
        \frac{1-\Gamma}{2} + \frac{3(1-\Gamma)}{4} = \frac{5(1-\Gamma)}{4} = \Gamma
        ~.
    \]
    where the second equality follows by $\Gamma = \frac{5}{9}$.
\end{proof}

The next lemma follows by the definition of the online primal and dual updates.

\begin{lemma}
    The primal and dual objectives are equal at all time.
\end{lemma}

It remains to analyze approximate dual feasibility, as stated in the next lemma.

\begin{lemma}
    For any edge $(i,j) \in E$, we have the following at the end of the algorithm:
    \[
        \alpha_i + \beta_j \ge \Gamma
        ~.
    \]
\end{lemma}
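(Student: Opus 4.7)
The plan is a case analysis on the state at $j$'s arrival, exploiting the monotonicity of $\alpha_i$ (which only grows over the course of the algorithm) so that any lower bound established at the moment of $j$'s round carries to the end. Fix an edge $(i,j)$ and let $k_i$ denote its value at the instant $j$ arrives. By the definition of the algorithm under perfect negative correlation, the relevant values are $k_i=0$ (then $x_i=0$), $k_i=1$ (then $x_i=\tfrac{1}{2}$), and $k_i\in\{2,\infty\}$ (then $x_i=1$), and the preceding lemma pins down $\alpha_i$ as one of $0$, $\tfrac{1-\Gamma}{2}$, $\Gamma$ respectively.

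The case $x_i=1$ at $j$'s arrival is immediate: $\alpha_i=\Gamma$ already, so $\alpha_i+\beta_j\ge\Gamma$ regardless of the value of $\beta_j$. The case $x_i=0$ (i.e.\ $k_i=0$) forces $k_{\min}=0$, in particular ruling out unmatched rounds. If $j$'s round is randomized then $\beta_j=\Gamma$ suffices on its own; if it is deterministic then $|B(j)|=1$ together with $k_i=k_{\min}=0$ forces $i$ to be the unique chosen neighbor, so $\alpha_i$ jumps to $\Gamma$ during $j$'s round and $\beta_j=1-\Gamma$ gives sum equal to $1$.

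The interesting case is $x_i=\tfrac{1}{2}$ (i.e.\ $k_i=1$), where I use that $\alpha_i\ge\tfrac{1-\Gamma}{2}$ persists. I split by $k_{\min}$ at $j$'s arrival and by round type. When $k_{\min}=0$, either $\beta_j=\Gamma$ (randomized) or $\beta_j=1-\Gamma$ (deterministic), both of which combine with $\alpha_i\ge\tfrac{1-\Gamma}{2}$ to exceed $\Gamma$, using only $\Gamma=\tfrac{5}{9}\le\tfrac{3}{5}$. When $k_{\min}=1$ and the round is randomized, $\beta_j=\tfrac{3\Gamma-1}{2}$ telescopes with $\tfrac{1-\Gamma}{2}$ to exactly $\Gamma$. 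When $k_{\min}=1$ and the round is deterministic, $|B(j)|=1$ together with $k_i=k_{\min}=1$ again forces $i$ to be chosen, so the increment $\tfrac{3(1-\Gamma)}{4}$ raises $\alpha_i$ to $\tfrac{5(1-\Gamma)}{4}$, which equals $\Gamma$ exactly when $\Gamma=\tfrac{5}{9}$.

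The pivot of the whole analysis is the identity $\tfrac{5(1-\Gamma)}{4}=\Gamma$ at $\Gamma=\tfrac{5}{9}$; the constants in the dual update were tuned precisely so that two subcases (the $k_i=1$, $k_{\min}=1$ randomized and deterministic rounds) saturate at $\Gamma$, and these saturating branches are where the $\tfrac{5}{9}$ ratio comes from. The main obstacle, in so far as there is one, is not computational but organizational: setting up the right case split on $(k_i,k_{\min},\text{round type})$ and observing that unmatched rounds cannot occur when $k_i$ is finite, so that every branch admits a clean conclusion.
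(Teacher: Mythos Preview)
Your proposal is correct and follows essentially the same approach as the paper: a case analysis on $k_i$ (equivalently $x_i$) at the moment $j$ arrives, handling $k_i=0$, $k_i=1$, and $k_i\in\{2,\infty\}$ in turn and using that $\alpha_i$ is monotone. The only cosmetic difference is in the $k_i=1$, $k_{\min}=1$, deterministic sub-case: the paper observes that $\alpha_i$ and $\beta_j$ together absorb the full gain of $\tfrac{1}{2}$, giving $\alpha_i+\beta_j=\tfrac{2}{9}+\tfrac{1}{2}>\Gamma$, whereas you argue that $\alpha_i$ alone reaches $\tfrac{5(1-\Gamma)}{4}=\Gamma$ and drop the positive $\beta_j$; both are valid.
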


\begin{proof}
    Consider the moment when vertex $j$ arrives and the value of $k_i$ then.
    
    \paragraph{Case 1: $k_i = 0$.}
    In this case, we have $x_i = 0$ and $\alpha_i = 0$ at the time when $j$ arrives.
    If it is a randomized round, we get that $\beta_j = \Gamma$ and, thus, the inequality follows.
    If it is a deterministic round, it must be the case that $j$ is matched to $i$ and, thus, $\alpha_i + \beta_j = 1$.
    
    \paragraph{Case 2: $k_i = 1$.}
    In this case, we have $x_i = \frac{1}{2}$ and $\alpha_i = \frac{1 - \Gamma}{2} = \frac{2}{9}$ at the time when $j$ arrives.
    If it is a randomized round, we have $\beta_j = \Gamma$ if $\kmin = 0$, and $\frac{3\Gamma-1}{2}$ if $\kmin = 1$.
    If it is a deterministic round with $\kmin = 0$, we have $\beta_j = 1 - \Gamma$.
    Therefore:
    \[
        \beta_j \ge \min \bigg\{ \Gamma, \frac{3\Gamma-1}{2}, 1 - \Gamma \bigg\} = \min \bigg\{ \frac{5}{9}, \frac{1}{3}, \frac{4}{9} \bigg\} = \frac{1}{3}
        ~.
    \]
    
    Hence, we get that:
    \[
        \alpha_i + \beta_j \ge \frac{2}{9} + \frac{1}{3} = \frac{5}{9} = \Gamma
        ~.
    \]

    If it is a deterministic round with $\kmin = 1$, on the other hand, it must be the case that $j$ is matched to $i$.
    As a result, on top of having $\alpha_i = \frac{2}{9}$ before the arrival of $j$, $\alpha_i$ and $\beta_j$ further split the gain of $\frac{1}{2}$ due to $j$.
    In total, we have:
    \[
        \alpha_i + \beta_j = \frac{2}{9} + \frac{1}{2} > \Gamma
        ~.
    \]
    
    \paragraph{Case 3: $k_i = 2$ or $\infty$.}
    In this case, we have $\alpha_i = \Gamma$ and, thus, the inequality follows.
\end{proof}

\subsubsection{Competitive Ratio: Upper Bound}

Consider the same instance for the vanilla version, but keeping only the first two rounds.
Concretely, let there be a bipartite graph with $9$ vertices on each side, denoted as $i_1, i_2, \dots, i_9$ and $j_1, j_2, \dots, j_9$, and a perfect matching with $i_k$ matched to $j_k$ for $k = 1, 2, \dots, 9$.
The first three online vertices, $j_1$, $j_2$, and $j_3$, are connected to all offline vertices.
After their arrivals, $i_1$, $i_2$, and $i_3$ are unmatched while the remaining $6$ offline vertices are matched by half.
Then, the next two online vertices, $j_4$ and $j_5$, are connected to the last $6$ offline vertices, i.e., $j_4$ to $j_9$.
After their arrival, $i_4$ and $i_5$ remain matched by half, while $i_6$ to $i_9$ are fully matched. 
Therefore, the algorithm finds a matching of size $\frac{1}{2} \cdot 2 + 1 \cdot 4 = 5$  in expectation, while the optimal matching has size $9$. 
The ratio is $\frac{5}{9}$, matching the lower bound that we show.

\subsubsection{Infeasibility}
\label{sec:perfect-correlation-infeasible}

This subsection presents an example demonstrating that two-choice greedy with perfect negative correlation is infeasible in the online setting.
Consider a graph with $4$ offline vertices, denoted as $1$ to $4$.
The first online vertex, denoted as $5$, is connected to $1$ and $2$.
The second online vertex, denoted as $6$, is connected to $3$ and $4$.
The third online vertex, denoted as $7$, has two possibilities:
it is connected with either $1$ and $3$, or $1$ and $4$.
In the former case, 
the following pairs of edges have perfect negative correlations: $(1,5)$ and $(2,5)$, $(1,7)$ and $(3,7)$, $(1,5)$ and $(1,7)$, and $(3,6)$ and $(3,7)$.
The first two pairs are due to having the same online vertex;
the last two pairs are due to having the same offline vertex.
Hence, we can deduce that $(2,5)$ and $(3,6)$ have perfect positive correlation. 
In the latter case, however, a similar argument gives that $(2,5)$ and $(3,6)$ have perfect negative correlation.
An online algorithm cannot handle both cases simultaneously since the correlation between $(2,5)$ and $(3,6)$ are determined before the arrival of vertex $7$ in the online setting.

\section{Greedy with Partial Negative Correlation}
\label{sec:partial-correlation}

This section presents the actual algorithm and its analysis.
For any $\ell \in \{1, 2\}$, let $-\ell$ denote the other element in $\{1, 2\}$, i.e., $3 - \ell$.

\subsection{Online Correlated Selection}

The main ingredient is a subroutine which we will refer to as the \emph{online correlated selection (OCS)}.
See Algorithm~\ref{alg:OCS}.

Informally, the OCS ensures that (1) the marginal distribution of any particular round is uniform over the candidates, and (2) for any fixed offline vertex, the randomness in the rounds in which it is a candidate is partially negatively correlated.
Therefore, the probability that a vertex is matched after $k$ randomized rounds is strictly greater than $1 - 2^{-k}$ for any $k \ge 2$.
See Lemma~\ref{lem:OCS} for a precise statement.

\begin{algorithm}[t]
    \caption{Online Correlated Selection (OCS)}
    \label{alg:OCS}
    \begin{algorithmic}
        \medskip
        \STATEx \textbf{State variables:}
        \begin{itemize}
            \item $\type_i \in \big\{ \chosen, \notchosen, \nonadapt \big\}$ for each offline vertex $i \in L$; 
            initially, let $\type_i = \nonadapt$.
        \end{itemize}
        \STATEx \textbf{On receiving $2$ candidate offline vertices $i_1$ and $i_2$ (for an online vertex $j \in R$):}
        \begin{enumerate}
            \item With probability $\frac{1}{2}$, let it be an \emph{oblivious step}:
            \begin{enumerate}
                \item Draw $\ell, m \in \{1, 2\}$ uniformly at random.
                \item Let $\type_{i_{-m}} = \nonadapt$.
                \item If $m = \ell$, let $\type_{i_m} = \chosen$; otherwise, let $\type_{i_m} = \notchosen$.
                %
                %
            \end{enumerate}
            \item Otherwise (i.e., with probability $\frac{1}{2}$), let it be an \emph{adaptive step}:
            \begin{enumerate}
                \item Draw $m \in \{1, 2\}$ uniformly at random.
                \item If $\type_{i_m} = \chosen$, let $\ell = -m$; \newline
                if $\type_{i_m} = \notchosen$, let $\ell = m$; \newline
                if $\type_{i_m} = \nonadapt$, draw $\ell \in \{1, 2\}$ uniformly at random.
                \item Let $\type_{i_1} = \type_{i_2} = \nonadapt$.
            \end{enumerate}
            \item Return $i_\ell$.
        \end{enumerate}
        \smallskip
    \end{algorithmic}
\end{algorithm}

We now demonstrate how it works.
It maintains a state variable $\type_i$ for each offline vertex $i \in L$.
If the state $\type_i$ is equal to $\chosen$ or $\notchosen$, it reflects the matching status of $i$ \emph{the last time when $i$ is an candidate}, and indicates that the information can be used the next time when $i$ is a candidate.
If the state $\type_i$ is equal to $\nonadapt$, it means that the matching status of offline vertex $i$ cannot be used the next time when $i$ is a candidate.

For each randomized round in the two-choice greedy algorithm, where $i_1$ and $i_2$ are the candidates, the OCS picks one of the two candidates as follows.
First, it decides whether this is an \emph{oblivious step}, or an \emph{adaptive step} uniformly at random.

In an oblivious step, it uses a fresh random bit to determines its choice $i_\ell$, $\ell \in \{1, 2\}$, to be returned in this round.
Then, it draws $m \in \{1, 2\}$ uniformly at random and sets $\type_{i_m}$ to reflect its matching status in this round;
$\type_{i_{-m}}$ is set to $\nonadapt$.
That is, the OCS forwards the random bit in this round to subsequent rounds for only one of the two candidates, chosen uniformly at random.

In an adaptive step, on the other hand, the OCS seeks to use the previous matching status of the candidates to determine its choice of $i_\ell$.
First, it draws $m \in \{1, 2\}$ uniformly at random, and checks the state variable of $i_m$.
To achieve negative correlation, the OCS let the matching status of $i_m$ in this round to be the opposite of the state.
That is, if the state was $\chosen$, indicating that $i_m$ was matched the last time when it was a candidate, the OCS would choose $i_{-m}$ this time, and vice versa;
if the state variable was equal to $\nonadapt$, the OCS would use a fresh random bit to determine $i_\ell$.
In either case, reset the states of both $i_1$ and $i_2$ to be $\nonadapt$.

\begin{lemma}
    \label{lem:OCS}
    For any fixed sequence of pairs of candidates, any offline vertex $i$, and any $k \ge 0$, the OCS ensures that after being a candidate $k$ times, $i$ is matched with probability at least $1 - 2^{-k} \cdot f_k$, where $f_k$ is defined recursively as:
    \begin{equation}
        \label{eqn:OCS-recurrence}
        f_k = 
        \begin{cases}
            1 & k = 0, 1 \\
            f_{k-1} - \frac{1}{16} f_{k-2} & k \ge 2
        \end{cases}
    \end{equation}
\end{lemma}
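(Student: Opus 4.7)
\textbf{My plan} is to decompose each round's randomness into two independent fair coins, isolate the single mechanism that creates negative correlation between $i$'s match outcomes in consecutive candidacies, and convert the resulting structure into the recurrence defining $f_k$.

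First, enumerate the rounds in which $i$ is a candidate as $t = 1, 2, \ldots$, and for each such round $t$ introduce two independent fair bits: $T_t \in \{O, A\}$ for the oblivious-versus-adaptive choice and $M_t \in \{i, \bar i\}$ for whether the index $m$ drawn inside the round selects $i$ or its partner. Across rounds these are mutually independent. A straightforward case check against Algorithm~\ref{alg:OCS} will establish: (i) marginally $\Pr[X_t = 1] = \tfrac12$, where $X_t$ indicates that $i$ is matched in round $t$; (ii) $\type_i$ leaves round $t$ with an informative value ($\chosen$ or $\notchosen$) iff $T_t = O$ and $M_t = i$, an event of probability $\tfrac14$, and is reset to $\nonadapt$ otherwise; (iii) $X_t$ is a deterministic function of $\type_i$ at the start of round $t$ exactly when $T_t = A$ and $M_t = i$, while in every other case $X_t$ is produced by fresh randomness not yet seen in $i$'s history (possibly routed through the partner's state, which is generically independent of $i$).

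Next I will call a pair of consecutive candidacies $(t{-}1, t)$ \emph{correlated} when $T_{t-1} = O,\ M_{t-1} = i,\ T_t = A,\ M_t = i$, an event of probability $\tfrac{1}{16}$. Directly from the update rules, inside such a pair one has $X_t = 1 - X_{t-1}$, so $i$ must be matched in at least one of the two rounds. When no pair in $\{2, \ldots, k\}$ is correlated, the outcomes $X_1, \ldots, X_k$ are mutually independent $\mathrm{Bernoulli}(1/2)$ in the clean case where every partner of $i$ is a fresh vertex; any reuse of partners only adds further anti-correlations (by the symmetric mechanism with the roles of $i$ and its partner swapped), which can only shrink the quantity of interest. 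This yields $u_k := \Pr[i \text{ unmatched in all } k \text{ candidacies}] \le 2^{-k} g_k$, where $g_k := \Pr[\text{no correlated pair among } 2, \ldots, k]$. Splitting $g_k$ according to whether $(T_{k-1}, M_{k-1}) = (O, i)$---the only way a correlated pair can occur at position $k$---and using independence of $(T_k, M_k)$ from the past, I expect to obtain $g_k = g_{k-1} - \tfrac{1}{16} g_{k-2}$ with $g_0 = g_1 = 1$, which is exactly the recurrence defining $f_k$; the lemma follows.

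\textbf{The main obstacle} is justifying the independence claim when partners of $i$ themselves reappear later as candidates, so that a partner's $\type$ at the moment of query may carry information about $i$'s past. I would handle this either by reducing to the worst-case input with fresh partners---arguing monotonicity, since any extra shared-partner correlation is itself negative and only shrinks $u_k$---or by invoking a negative-association inequality for mean-$\tfrac12$ binary random variables.
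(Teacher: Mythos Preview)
Your plan is essentially the paper's argument in different clothing: your ``correlated pair'' $(T_{t-1},M_{t-1},T_t,M_t)=(O,i,A,i)$ is exactly what the paper calls a \emph{realized arc} $(t_{j-1},t_j)$ in the \emph{ex-post dependence graph}, and your derivation of $g_k = g_{k-1}-\tfrac{1}{16}g_{k-2}$ is a cleaner variant of the paper's $A_k/B_k$ split (the paper conditions on whether node $t_k$ ``picks'' the outgoing arc, you condition on $(T_{k-1},M_{k-1})$; both give the same recurrence).

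The one substantive difference is how the obstacle you flag is handled. You propose to argue that reused partners only introduce extra negative correlation and then invoke monotonicity or negative association; this is plausible but is exactly the step you leave unproved. The paper sidesteps it entirely: it conditions not on ``no correlated pair via $i$'' but on the stronger event ``no arc of the ex-post graph has both endpoints in $\{t_1,\dots,t_k\}$'' --- this includes arcs via partners. Under that stronger conditioning, each $X_{t_j}$ is determined by a distinct fresh $\ell$-bit (either $\ell_{t_j}$ or $\ell_s$ for a single external round $s$, and an oblivious step forwards its bit through at most one vertex), so the $X_{t_j}$ are genuinely i.i.d.\ $\mathrm{Bernoulli}(\tfrac12)$ and the factor $2^{-k}$ is exact. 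Only afterwards does the paper relax to ``no consecutive $i$-arc realized'' as an upper bound on the first factor, recovering $f_k$. So instead of your monotonicity argument, just enlarge the conditioning event to kill the partner dependence, then bound its probability by $g_k$; the rest of your outline goes through unchanged.
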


\begin{proof}
    Let $P^1 = \{i_1^1, i_2^1\}, P^2 = \{i_1^2, i_2^2\}, \dots, P^T = \{i_1^T, i_2^T\}$ be the sequence of pairs of candidate offline vertices.
    We start with an interpretation of the OCS in the language of graphs. 
    First, consider a graph $G^{\textrm{\em ex-ante}} = (V, E^{\textrm{\em ex-ante}})$ as follows which we shall refer to as the \emph{ex-ante dependence graph}. 
    To make a distinction with the vertices and edges in the matching problem, we shall refer to the vertices and edges in the dependence graph as \emph{nodes} and \emph{arcs} respectively.
    
    Let there be a node for each pair of candidates;
    we will refer to them as $1 \le t \le T$, i.e.:
    \[
        V = \big\{ t : 1 \le t \le T \big\}
        ~.
    \]
    
    Further, for any fixed offline vertex $i$, let there be a directed arc from $t_1$ to $t_2$ for any two consecutive times in which $i$ is a candidate, i.e.:
    \[
        E^{\textrm{\em ex-ante}} = \big\{ (t_1, t_2) : t_1 < t_2; \exists i \in L \textrm{ s.t.\ } i \in P^{t_1}, i \in P^{t_2}, \textrm{ and } \forall t_1 < t < t_2, i \notin P^t \big\}
        ~.
    \]
    
    See Figure~\ref{fig:dependence-graphs-ex-ante} for an example.

    \begin{figure}[t]
        \centering
        \begin{subfigure}{\textwidth}
            \includegraphics[width=\textwidth]{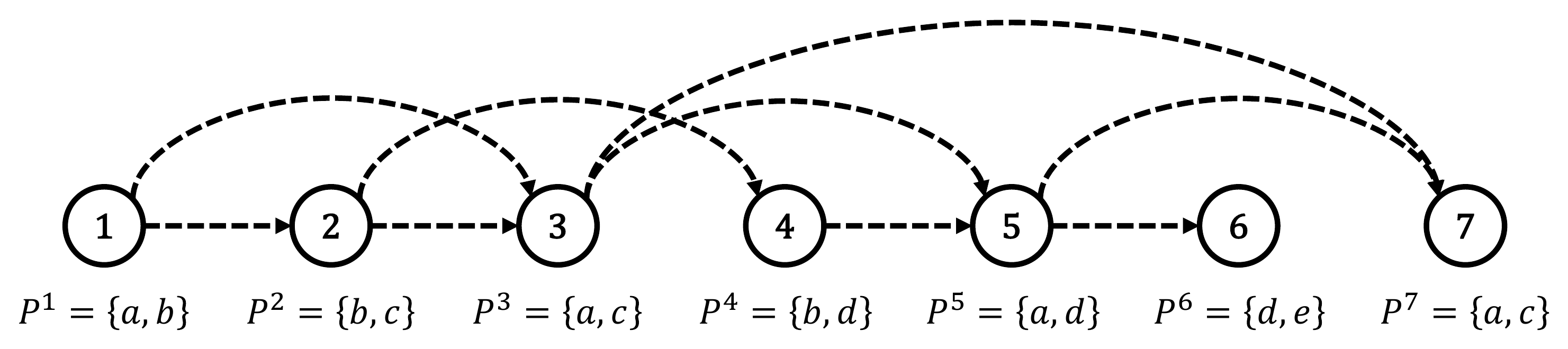}
            \caption{\emph{Ex-ante} dependence graph}
            \label{fig:dependence-graphs-ex-ante}
        \end{subfigure}
        
        \bigskip
        
        \begin{subfigure}{\textwidth}
            \includegraphics[width=\textwidth]{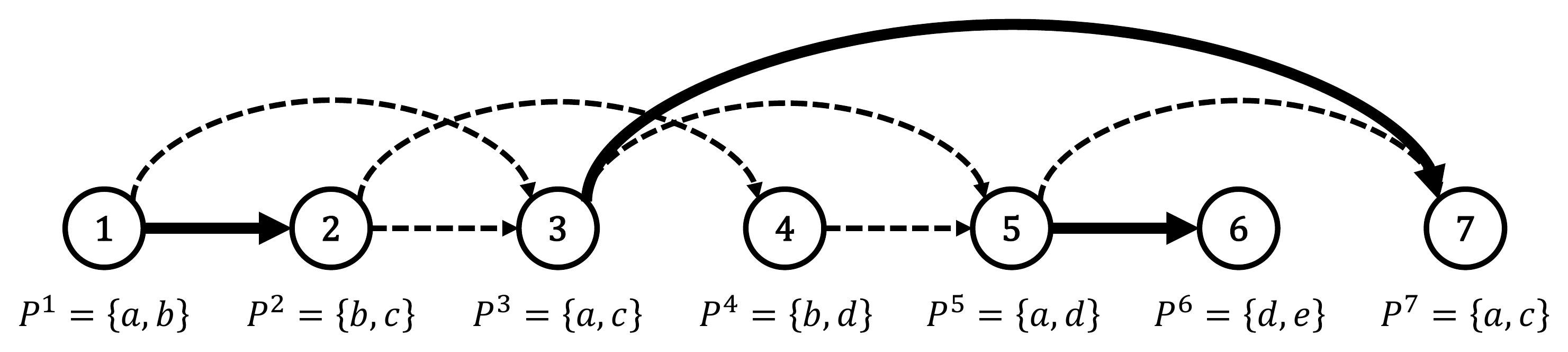}
            \caption{\emph{Ex-post} dependence graph (bold and solid edges)}
            \label{fig:dependence-graphs-ex-post}
        \end{subfigure}
        
        \bigskip
        
        \begin{subfigure}{\textwidth}
            \includegraphics[width=\textwidth]{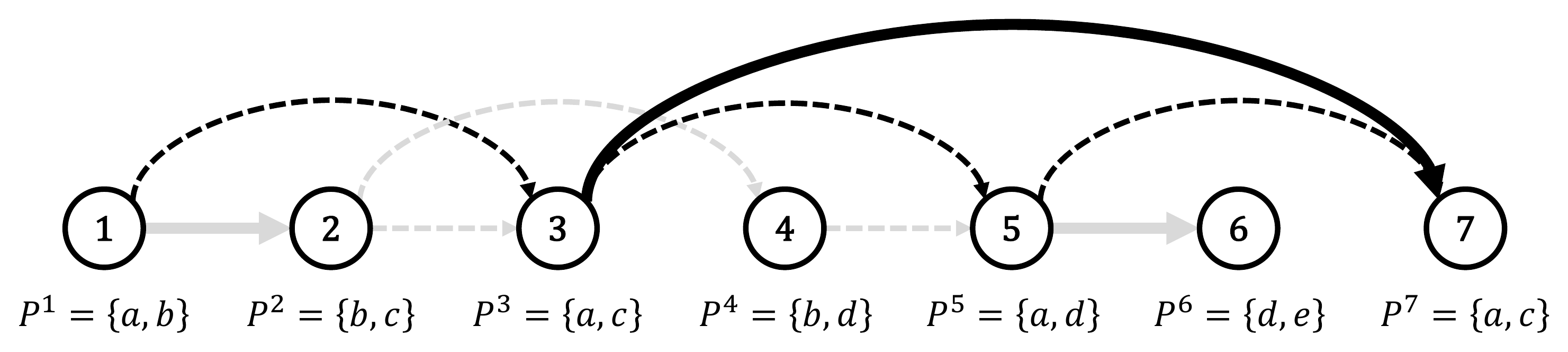}
            \caption{Dependence subgraph associated with a fixed candidate offline vertex, e.g., vertex $a$}
            \label{fig:dependence-graphs-fixed-candidate}
        \end{subfigure}
        \caption{An example with $5$ offline vertices and a sequence of $7$ pairs}
        \label{fig:dependence-graphs}
    \end{figure}

    Each arc in the \emph{ex-ante} dependence graph represents two steps in the sequence in which the OCS may use the same random bit to determine the offline vertices it returns.
    There are at most $2$ outgoing arcs and at most $2$ incoming arcs for each node by definition.
    
    In particular, consider any arc $(t_1, t_2)$ in the \emph{ex-ante} dependence graph, with $i$ being the common candidate.
    If the randomness used by the OCS is such that (1) step $t_1$ is an oblivious step, (2) $i_m = i$ in step $t_1$, (3) step $t_2$ is an adaptive step, and (4) $i_m = i$ in step $t_2$, the matching status of $i$ would be perfectly negatively correlated in the sense that $i$ is chosen in exactly one of the two steps.
    Each of the $4$ events happens independently with probability $\frac{1}{2}$. 
    
    The \emph{ex-post} dependence graph $G^{\textrm{\em ex-post}} = (V, E^{\textrm{\em ex-post}})$ is a subgraph of the \emph{ex-ante} dependence graph, keeping the arcs which correspond to the pairs of time steps that are perfectly negatively correlated, given the realization of whether each step is oblivious or adaptive, and the value of $m$ therein. 
    Equivalently, the \emph{ex-post} dependence graph is realized as follows.
    Over the randomness with which the OCS decides whether each step is oblivious or adaptive, and the value of $m$, each node in the \emph{ex-ante} dependence graph picks at most one of its incident arcs, each with probability $\frac{1}{4}$;
    an arc is realized in the \emph{ex-post} graph if both incident nodes pick it.
    With this interpretation, we get that the \emph{ex-post} graph is a matching.
    The OCS may be viewed as a randomized online algorithm that picks a matching of the \emph{ex-ante} graph, such that each arc in the \emph{ex-ante} graph is chosen with probability lower bounded by a constant.
    See Figure~\ref{fig:dependence-graphs-ex-post} for an example.
    
    Finally, we lower bound the probability that an offline vertex $i$ remains unmatched after being a candidate in $k$ steps in the sequence.
    Let $t_1 < t_2 < \dots$ be the time steps in which $i$ is a candidate.
    We will use offline vertex $a$ and $k = 4$ in  Figure~\ref{fig:dependence-graphs-fixed-candidate} as a running example, where $t_1 = 1, t_2 = 3, t_3 = 5, t_4 = 7$ and the relevant arcs in the dependence graphs are $(1, 3)$, $(3, 5)$, $(3, 7)$, and $(5, 7)$.
    
    If at least one of the arcs among $t_1 < t_2 < \dots < t_k$ are realized in the \emph{ex-post} dependence graph, vertex $i$ must be matched after step $t_k$.
    This is because the randomness (related to the choice of $\ell$ in the OCS) is perfectly negatively correlated in the two incident nodes of the arc and thus, $i$ is chosen exactly once in these two steps.
    For example, given that the arc $(3, 7)$ is realized in Figure~\ref{fig:dependence-graphs-fixed-candidate}, vertex $a$ must be matched after step $7$.
    
    On the other hand, if none of these arcs are realized, the random bits used in the $k$ steps $t_1 < t_2 < \dots < t_k$ are independent.
    For example, consider vertex $a$ and $k = 3$ in Figure~\ref{fig:dependence-graphs-fixed-candidate}; 
    vertex $a$ is chosen independently with probability $\frac{1}{2}$ in steps $t_1 = 1$, $t_2 = 3$, and $t_3 = 5$, given that neither $(1, 3)$ nor $(3, 5)$ is realized.
    
    Importantly, even if some of these steps are adaptive in that the matching decisions are based on the random bits realized earlier in some oblivious steps, from $i$'s viewpoint, they are still independent of the random bits in the other rounds that $i$ is involved in.
    For example, from $c$'s viewpoint in Figure~\ref{fig:dependence-graphs-fixed-candidate}, even though the matching decision in step $2$ is determined by that in step $1$, it is independent of the matching decisions in steps $3$ and $7$ that $c$ is involved in.
    
    Putting together, the probability that $i$ is \emph{unmatched} after steps $t_1 < t_2 < \dots < t_k$ is equal to (1) the probability that none of the arcs among these steps is realized, times (2) all $k$ independent random bits are against $i$.
    The latter is equal to $2^{-k}$.
    It remains to analyze the former; 
    we shall upper bound it by the probability that none of the arcs $(t_1, t_2), (t_2, t_3), \dots, (t_{k-1}, t_k)$ is realized.
    Denote this event as $F_k$ and its probability as $f_k$.
    
    Trivially, we have $f_0 = f_1 = 1$.
    To prove that the stated recurrence in Eqn.~\eqref{eqn:OCS-recurrence} governs $f_k$, we further divide the event $F_k$ into two subevents.
    Let $A_k$ the event that none of the arcs $(t_1, t_2), (t_2, t_3), \dots, (t_{k-1}, t_k)$ is realized, and step $t_k$ picks arc $(t_k, t_{k+1})$ in realizing the \emph{ex-post} dependence graph.
    Let $B_k$ be the event that none of the arcs is realized and step $t_k$ does not pick arc $(t_k, t_{k+1})$.
    Let $a_k$ and $b_k$ be the probability of $A_k$ and $B_k$ respectively.
    We have that $A_k$ and $B_k$ form a partition of $F_k$, and thus:
    \[
        f_k = a_k + b_k
        ~.
    \]
    
    If step $t_k$ picks arc $(t_k, t_{k+1})$, which happens with probability $\frac{1}{4}$, arc $(t_{k-1}, t_k)$ must not be realized by definition.
    Therefore, conditioned on the choice of $t_k$, $A_k$ happens if and only if the choices made by steps $t_1, t_2, \dots t_{k-1}$ is such that none of $(t_1, t_2), \dots, (t_{k-2}, t_{k-1})$ is realized, i.e., when $F_{k-1}$ happens.
    That is:
    \[
        a_k = \frac{1}{4} f_{k-1}
        ~.
    \]
    
    On the other hand, if step $t_k$ does not pick $(t_k, t_{k+1})$, there are two possibilities.
    The first case is when $t_k$ picks $(t_{k-1}, t_k)$, which happens with probability $\frac{1}{4}$.
    In this case, the choices made by $t_1, \dots, t_{k-1}$ must be such that none of $(t_1, t_2), \dots, (t_{k-2}, t_{k-1})$ is realized, and $t_{k-1}$ does \emph{not} pick $(t_{k-1}, t_k)$, i.e., $B_{k-1}$ happens.
    The second case is when $t_k$ picks neither $(t_{k-1}, t_k)$ nor $(t_k, t_{k+1})$, which happens with probability $\frac{1}{2}$.
    In this case, the choices made by $t_1, \dots, t_{k-1}$ must be such that none of $(t_1, t_2), \dots, (t_{k-2}, t_{k-1})$ is realized, i.e., $F_{k-1}$ happens.
    Putting together, we have:
    \[
        b_k = \frac{1}{4} b_{k-1} + \frac{1}{2} f_{k-1}
        ~.
    \]
    
    Eliminating $a_k$'s and $b_k$'s by combining the above three equations, we get the recurrence stated in Eqn.~\eqref{eqn:OCS-recurrence}.
\end{proof}

In fact, we can show a stronger version, which will be useful in the weighted case.
We say that $t_1 < t_2 < \dots < t_k$ is a consecutive sequence of rounds in which $i$ is involved if $i$ is a candidate in these rounds, but in no other rounds in between.

\begin{lemma}
    \label{lem:OCS-generalized}
    For any fixed sequence of pairs of candidates, any fixed candidate $i$, and any disjoint consecutive sequences of rounds of lengths $k_1, k_2, \dots, k_m \ge 1$ in which $i$ is involved, the OCS in Algorithm~\ref{alg:OCS} ensures that $i$ is chosen in at least one of the rounds with probability at least:
    \[
        1 - \prod_{\ell=1}^m 2^{-k_\ell} \cdot f_{k_\ell} 
        ~.
    \]
\end{lemma}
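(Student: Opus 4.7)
The plan is to extend the argument of Lemma~\ref{lem:OCS} to handle multiple disjoint consecutive sequences for $i$. I will set up the ex-ante dependence graph as before, restricted to arcs along $i$'s path. For each sequence $\ell$, let $F^\ell$ denote the event that none of the within-sequence arcs $(t^\ell_1, t^\ell_2), \ldots, (t^\ell_{k_\ell - 1}, t^\ell_{k_\ell})$ is realized in the ex-post graph, and let $F = \bigcap_\ell F^\ell$. Let $E$ be the event that $i$ is not chosen in any round across the sequences. The bound will follow from three steps: (i) $\Pr[F] \le \prod_\ell f_{k_\ell}$; (ii) $E$ implies $F$; and (iii) the conditional bound $\Pr[E \mid F] \le 2^{-\sum_\ell k_\ell}$.

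For step (i), each $F^\ell$ depends only on the step-type and $m$-value choices at the nodes $t^\ell_1, \ldots, t^\ell_{k_\ell}$, and these node sets are disjoint across sequences. Since the OCS makes independent choices at different rounds, the events $F^\ell$ are mutually independent, and applying the recurrence from Lemma~\ref{lem:OCS} within each sequence gives $\Pr[F^\ell] \le f_{k_\ell}$. For step (ii), if any within-sequence arc is realized, then $i$ is matched in exactly one endpoint of that arc, so $E$ already fails within that sequence.

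The main obstacle is step (iii). The key structural observation is that every chain in $i$'s path in the ex-post graph has length at most $2$: for two consecutive arcs $(r_1, r_2), (r_2, r_3)$ in $i$'s path to both be realized, the shared node $r_2$ would have to pick both its incoming-$i$ and outgoing-$i$ slots, which is impossible since each node picks at most one slot. Given $F$, consecutive rounds within any single sequence lie in distinct chains. Two sequence rounds from different sequences can share a chain only when they are adjacent along $i$'s path and the cross-sequence arc between them is realized, in which case the chain has length exactly $2$ and $i$ is matched in exactly one of the two rounds, so $E$ is impossible. When no such cross-sequence merging occurs, each of the $\sum_\ell k_\ell$ sequence rounds sits in its own chain, and I will argue, by extending the ``from $i$'s viewpoint'' analysis in Lemma~\ref{lem:OCS}, that the ``$i$ chosen'' indicators at these rounds are independent uniform conditioned on $F$ and the chain realization $\sigma$; the reset of both $\type$ variables in every adaptive step is what rules out any long-range propagation of correlation through other candidates' states. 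This gives $\Pr[E \mid F, \sigma] \le 2^{-\sum_\ell k_\ell}$ uniformly over $\sigma$, hence $\Pr[E \mid F] \le 2^{-\sum_\ell k_\ell}$; combining with (i) and (ii) yields the claimed bound $\Pr[E] \le \prod_\ell 2^{-k_\ell} f_{k_\ell}$.
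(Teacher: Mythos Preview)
Your proposal is correct and follows essentially the same approach as the paper: independence of the events $F^\ell$ across sequences (since they depend on the step-type and $m$-choices at disjoint node sets) together with the observation that, once no relevant $i$-arcs are realized, the selection bits at the sequence rounds are independent and uniform from $i$'s viewpoint. The paper packages step~(iii) a bit more directly by conditioning on the event that no $i$-arcs among \emph{all} the sequence steps are realized, obtaining $\Pr[E]=\Pr[\text{no such arcs}]\cdot 2^{-\sum_\ell k_\ell}$ exactly and then bounding $\Pr[\text{no such arcs}]\le\Pr[F]=\prod_\ell f_{k_\ell}$, which sidesteps your explicit case split on cross-sequence arcs.
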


\begin{proof}
    Let $t_1^\ell < t_2^\ell < \dots, t_{k_\ell}^\ell$ be the $\ell$-th consecutive sequence of round in which $i$ is involved, for any $1 \le \ell \le m$. 
    The probability that $i$ is never chosen is equal to (1) the probability that none of the arcs among the steps in these sequences is realized, times (2) the probability that all $\sum_{\ell=1}^m k_\ell$ random bits are against $i$.
    The latter is $\prod_{\ell=1}^m 2^{-k_\ell}$.
    We upper bound the former with the probability that for any $1 \le \ell \le m$, none of the arcs $(t_1^\ell, t_2^\ell), \dots, (t_{k_\ell-1}^\ell t_{k_\ell}^\ell)$ is realized.
    Further, note that the events are independent for different $\ell$, as each event only relies on the choice made by the nodes in the corresponding subsequence.
    Hence, it is at most $\prod_{\ell=1}^m f_{k_\ell}$.
\end{proof}

\subsection{Beating $\frac{1}{2}$ Using the OCS}

Finally, we show that the two-choice greedy algorithm is strictly better than $\frac{1}{2}$-competitive when it is combined with the OCS to have partial negative correlation in the randomized rounds.

\begin{theorem}
    \label{thm:main-unweighted}
    Two-choice greedy, with the randomized rounds implemented using the OCS, is at least $\Gamma = 0.505$-competitive.
\end{theorem}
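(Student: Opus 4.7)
The plan is to extend the online primal dual argument of Section~\ref{sec:partial-correlation}, now using Lemma~\ref{lem:OCS} as the quantitative input on matching probabilities. I would run two-choice greedy with the OCS handling its randomized rounds and track $k_i$, the number of randomized rounds in which offline vertex $i$ has been a candidate. By Lemma~\ref{lem:OCS}, $i$ is matched with probability at least $1 - 2^{-k_i} f_{k_i}$, so a randomized round with $\kmin = k - 1$ contributes at least $2 g_k$ to the primal, split equally between the two candidates, where $g_k := 2^{-(k-1)} f_{k-1} - 2^{-k} f_k$; a deterministic round with $\kmin = k - 1$ contributes $2^{-(k-1)} f_{k-1}$ to the chosen candidate.

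The next step is to parameterize the gain splitting by sequences $(a_k)_{k \ge 1}$ and $(a'_k)_{k \ge 1}$ with $a_k \in [0, g_k]$ and $a'_k \in [0, 2^{-(k-1)} f_{k-1}]$. In a randomized round with $\kmin = k - 1$, each candidate's $\alpha_i$ grows by $a_k$ and I set $\beta_j := 2(g_k - a_k)$; in a deterministic round with $\kmin = k - 1$, the chosen $\alpha_i$ grows by $a'_k$ and $\beta_j := 2^{-(k-1)} f_{k-1} - a'_k$. Writing $A_k := a_1 + \dots + a_k$, we have $\alpha_i \ge A_{k_i}$ at all times, with an extra jump by $a'_k$ the moment $i$ becomes deterministically matched (so with $k_i = k - 1$ just before). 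A case analysis on the type of $j$'s round and the role of a neighbor $i$ --- including the subcase where $i$ lies in $B(j) \setminus \{i_1, i_2\}$ in a randomized round, which is the tightest because $\alpha_i$ is not updated --- reduces approximate dual feasibility to
\begin{align*}
    A_{k-1} + 2(g_k - a_k) &\ge \Gamma && \text{(randomized round)}, \\
    A_k + 2^{-(k-1)} f_{k-1} - a'_k &\ge \Gamma && \text{(deterministic round, non-chosen neighbor)}, \\
    A_{k-1} + a'_k &\ge \Gamma && \text{(unmatched round, or chosen neighbor in a deterministic round)},
\end{align*}
for every $k \ge 1$.

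To verify $\Gamma = 0.505$, I would set $a_k := \tfrac{1}{2}(A_{k-1} + 2 g_k - \Gamma)$, clipped into $[0, g_k]$, making the first inequality tight as long as $A_{k-1} \le \Gamma$, together with $a'_k := \max(0, \Gamma - A_{k-1})$, making the third inequality tight. Letting $\epsilon_k := \Gamma - A_k$, this choice collapses into the clean recursion $\epsilon_k = \tfrac{3}{2} \epsilon_{k-1} - g_k$ with $\epsilon_0 = \Gamma$. The main obstacle, and what pins down the specific constant $0.505$, is to show two properties of this recursion: (i) $\epsilon_k \le 2^{-k} f_k$ for all $k$, which ensures $a'_{k+1}$ lies in its admissible range and also implies the middle inequality via a short calculation using $f_k = f_{k-1} - \tfrac{1}{16} f_{k-2}$; and (ii) $\epsilon_k$ eventually becomes nonpositive, after which $A_k \ge \Gamma$ and all three inequalities hold trivially. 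Since both $g_k$ and $\epsilon_k$ satisfy linear recurrences governed by $f_k$, verifying (i) and (ii) reduces to a direct numerical calculation over the first several $k$ together with an asymptotic tail estimate driven by the geometric decay rate of $f_k$.
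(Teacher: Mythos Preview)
Your plan follows the same online primal dual template as the paper: track $x_i = 1 - 2^{-k_i} f_{k_i}$ using Lemma~\ref{lem:OCS}, split gains according to a level-indexed schedule, and verify approximate dual feasibility level by level. The paper packages this as a finite LP in variables $\Delta\alpha(k)$ and $\beta(k)$ (Lemma~\ref{lem:ratio-lp}) and solves it numerically; your recursion $\epsilon_k = \tfrac32\epsilon_{k-1} - g_k$ is one explicit trajectory through that feasible region. One structural difference: in a deterministic round the paper sets $\beta_j = \beta(\kmin+1)$, i.e.\ the \emph{same} value a randomized round at the next level would assign, and pushes $\alpha_i$ up by $\sum_{\ell \ge \kmin}\Delta\alpha(\ell)$. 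This collapses your ``middle'' inequality into the single dual-feasibility constraint $\sum_{\ell=0}^{k-1}\Delta\alpha(\ell)+\beta(k)\ge\Gamma$, at the cost of the extra monotonicity constraint $\beta(k)\ge\beta(k+1)$; no separate parameter $a'_k$ is needed.

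There is one genuine gap in your outline. You assert that condition (i), $\epsilon_k \le 2^{-k}f_k$, implies the middle inequality ``via a short calculation using $f_k = f_{k-1} - \tfrac{1}{16}f_{k-2}$''. It does not. With your choice $a'_k=\epsilon_{k-1}$ (in the regime $\epsilon_{k-1}\ge 0$), the middle inequality becomes
\[
\epsilon_k + \epsilon_{k-1} \;\le\; 2^{-(k-1)} f_{k-1},
\]
whereas condition (i) at indices $k$ and $k-1$ only yields $\epsilon_k+\epsilon_{k-1}\le 2^{-k}f_k + 2^{-(k-1)}f_{k-1}$, which is strictly weaker for every $k$. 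Equivalently, substituting the recursion gives $\tfrac52\epsilon_{k-1}\le 2^{-(k-1)}f_{k-1}+g_k$, and plugging in (i) at $k-1$ would require $\tfrac32\cdot 2^{-(k-1)}f_{k-1}\le g_k$, which is false. The inequality $\epsilon_k+\epsilon_{k-1}\le 2^{-(k-1)}f_{k-1}$ does in fact hold numerically for $\Gamma=0.505$ (one can tabulate it alongside (i) and (ii) for $k\le 7$), but it is an additional constraint you must verify directly, not a consequence of (i). The paper's treatment of deterministic rounds is what makes this extra check disappear.
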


\begin{proof}
    We will maintain $x_i = 1 - 2^{-k_i} \cdot f_{k_i}$ for each offline vertex $i$ as a lower bound of the probability that $i$ is matched.
    For each online vertex $j$, let $x_{ij}$ be the increment in $x_i$ due to $j$.
    Then, we have:
    \[
        \primal = \sum_{i \in L} x_i
        ~.
    \]
    
    Recall that $\alg$ denote the expected objective given by the algorithm.
    The choices of $B(j)$'s by the two-choice greedy algorithm is independent of the random bits used by the OCS.
    Therefore, the sequence of pairs of candidates in the randomized rounds, which are sent to the OCS, is fixed.
    By Lemma~\ref{lem:OCS}, we have:
    \[
        \alg \ge \primal
        ~.
    \]
    
    
    To prove the stated competitive ratio, it remains to explain how to maintain a dual assignment such that (1) the dual objective is no more than the primal one, i.e., $\dual \le \primal$, and (2) it is approximately feasible up to a $\Gamma$ factor, i.e., $\alpha_i + \beta_j \ge \Gamma$ for every edge $(i, j)$.
    
    \subsubsection*{Dual Updates}
    
    It is based on the solution to the LP in the following lemma, whose proof is deferred to the end of the subsection.
    
    \begin{lemma}
        \label{lem:ratio-lp}
        The optimal value of the LP below is at least $0.505$:
        \begin{align}
            \text{\rm maximize} \quad & \Gamma \notag \\[1ex]
            \text{\rm subject to} \quad & \Delta \alpha(k) + \frac{1}{2} \beta(k) \le 2^{-k} \cdot f_k - 2^{-k-1} \cdot f_{k+1} & \forall k \ge 0 \label{eqn:lp-gain-split} \\
            & \sum_{\ell = 0}^{k-1} \Delta\alpha(\ell) + \beta(k) \ge \Gamma & \forall k \ge 0 \label{eqn:lp-dual-feasibility} \\
            & \beta(k) \ge \beta(k+1) & \forall k \ge 0 \label{eqn:lp-monotone-beta} \\[2ex]
            & \Delta\alpha(k), \beta(k) \ge 0 & \forall k \ge 0 \notag
        \end{align}

    \end{lemma}
    
    Consider an online vertex $j \in R$.
    Let $\kmin = \min_{i \in N(j)} k_i$ be the minimum number of randomized rounds in which a fixed neighbor of $j$ is a candidate.
    
    Suppose it is a randomized round. 
    Then, we have $k_i = \kmin$ for both $i \in B(j)$.
    For both $i \in B(j)$, $x_i$ increases by $2^{-\kmin} \cdot f_{\kmin} - 2^{-\kmin-1} \cdot f_{\kmin+1}$. 
    In the dual, increase $\alpha_i$ by $\Delta \alpha(\kmin)$ for both $i \in B(j)$, and let $\beta_j = \beta(\kmin)$. 
    
    Suppose it is a deterministic round.
    Let $i$ be the only vertex in $B(j)$. 
    Then, $x_i$ increases by $2^{-\kmin} \cdot f_{\kmin}$.
    In the dual, increase $\alpha_i$ by $\sum_{\ell \ge \kmin} \Delta \alpha(\ell)$, and let $\beta_j = \beta(\kmin+1)$.
    
    No update is needed in an unmatched round, as $\primal$ remains the same. 
    
    \subsubsection*{Objective Comparisons}
    
    Next, we show that the increment in the dual objective $\dual$ is at most that in the primal objective $\primal$.
    In a randomized round, it follows by Eqn.~\eqref{eqn:lp-gain-split}.
    In a deterministic round, it follows by a sequence of inequalities below:
    \begin{align*}
        \sum_{\ell \ge k_i} \Delta \alpha(\ell) + \beta(k_i+1) 
        & 
        \le \sum_{\ell \ge k_i} \Delta \alpha(\ell) + \frac{1}{2} \beta(k_i) + \frac{1}{2} \beta(k_i+1) && \textrm{(Eqn.~\eqref{eqn:lp-monotone-beta})} \\
        & 
        \le \sum_{\ell \ge k_i} \big( \Delta \alpha(\ell) + \frac{1}{2} \beta(\ell) \big) \\
        & 
        \le \sum_{\ell \ge k_i} \big( 2^{-\ell} \cdot f_\ell - 2^{-\ell-1} \cdot f_{\ell+1} \big) && \textrm{(Eqn.~\eqref{eqn:lp-gain-split})} \\[1ex]
        &
        = 2^{-k_i} \cdot f_{k_i}
        ~.
    \end{align*}

    \subsubsection*{Approximate Dual Feasibility.}
    
    We first summarize the following invariants which follow by the definition of the dual updates.
    \begin{itemize}
        \item For any offline vertex $i \in L$, $\alpha_i = \sum_{\ell=0}^{k_i-1} \Delta \alpha(\ell)$.
        \item For any online vertex $j$, $\beta_j = \beta(k)$ if it is matched either in a randomized round to neighbors with $k_i = k$, or in a deterministic round to a neighbor with $k_i = k-1$.
    \end{itemize}
    
    \clearpage 
    
    For any edge $(i, j) \in E$, consider the value of $k_i$ at the time when $j$ arrives.
    If $k_i = \infty$, the value of $\alpha_i$ alone ensures approximately dual feasibility because:
    \begin{align*}
        \alpha_i 
        & 
        = \sum_{\ell \ge 0} \Delta \alpha(\ell) \\
        &
        = \lim_{k \to \infty} \sum_{\ell = 0}^{k-1} \Delta \alpha(\ell) \\[1ex]
        &
        \ge \Gamma - \lim_{k \to \infty} \beta(k) && \textrm{(Eqn.~\eqref{eqn:lp-dual-feasibility})} \\[2ex]
        &
        = \Gamma 
        ~.
        && \textrm{(Eqn.~\eqref{eqn:lp-gain-split}, whose RHS tends to $0$)}
    \end{align*}
    
    Otherwise, by the definition of the two-choice greedy algorithm, $j$ is matched either in a randomized round to two vertices with $k_{i'} \le k_i$, or in a deterministic round to a vertex with $k_{i'} < k_i$.
    In both cases, we have:
    \[
        \beta_j \ge \beta(k_i)
        ~.
    \]
    
    Approximate dual feasibility now follows by $\alpha_i = \sum_{\ell = 0}^{k_i-1} \Delta \alpha(\ell)$ and Eqn.~\eqref{eqn:lp-dual-feasibility}.
\end{proof}

\begin{proof}[Proof of Lemma~\ref{lem:ratio-lp}]
    Consider a restricted version of the LP which is finite.
    For some positive $\kmax$, let $\Delta \alpha(k) = \beta(k) = 0$ for all $k > \kmax$.
    Then, it becomes:
    \begin{align*}
        \text{\rm maximize} \quad & \Gamma \notag \\[2ex]
        \text{\rm subject to} \quad & \Delta \alpha(k) + \frac{1}{2} \beta(k) \le 2^{-k} \cdot f_k - 2^{-k-1} \cdot f_{k+1} & 0 \le k \le \kmax \\[1ex]
        & \sum_{\ell = 0}^{k-1} \Delta\alpha(\ell) + \beta(k) \ge \Gamma & 0 \le k \le \kmax \\
        & \sum_{\ell = 0}^{\kmax} \Delta \alpha(\ell) \ge \Gamma \\[1.5ex]
        & \beta(k) \ge \beta(k+1) & 0 \le k < \kmax \\[4ex]
        & \Delta\alpha(k), \beta(k) \ge 0 & 0 \le k \le \kmax \notag
    \end{align*}
 
    See Table~\ref{tab:lp-solution} for an approximately optimal solution for the relaxed LP with $\kmax = 7$.
    It gives an objective $\Gamma \approx 0.5051$.
\end{proof}

\begin{table}[t]
    \centering
    \begin{tabular}{c|ccc}
        \hline
        $k$     & $f_k$     & $\Delta \alpha(k)$    & $\beta(k)$    \\
        \hline
        $0$     & $1.00000000$     & $0.24744702$             & $0.50510596$     \\
        $1$     & $1.00000000$     & $0.13679553$             & $0.25765895$     \\
        $2$     & $0.93750000$  & $0.06456829$             & $0.12086342$     \\
        $3$     & $0.87500000$   & $0.03020205$             & $0.05629513$     \\
        $4$     & $0.81640625$ & $0.01417514$             & $0.02609308$     \\
        $5$     & $0.76171875$ & $0.00674016$             & $0.01191794$     \\
        $6$     & $0.71069336$ & $0.00333533$             & $0.00517779$     \\
        $7$     & $0.66308594$ & $0.00184246$             & $0.00184246$     \\
        \hline
    \end{tabular}
    \caption{An approximately optimal solution to the restricted LP for $\kmax = 7$, rounded to the $8$-th digit after the decimal point, whose $\Gamma \approx 0.5051$.}
    \label{tab:lp-solution}
\end{table}

\bibliographystyle{plainnat}
\bibliography{matching.bib}

\end{document}